\newtheorem{condition}{Condition}
\begin{document}

\title{Cyber Insurance}



\author{Quanyan Zhu}
%
\institute{New York University, Brooklyn, NY, 11201; Email: qz494@nyu.edu}
%
\date{}
%

\maketitle

\section{Introduction}
\label{intro}

Critical Infrastructures are increasingly dependent on the information and communication technologies (ICTs) to sense, transmit, and fuse data for real-time operations and control of the infrastructures. The heavy integration of the ICTs has also brought many potential threat that can cause data privacy breaches, availability of the services, and the cascading damages. The vulnerabilities in the ICT can arise not only from unintentional misconfiguration and mismanagement of the protocols and devices but also from the intentional injection of the malware, spread of the worms, and the cyber attacks. Recent cyber attacks on Iranian nuclear power plants and the Ukraine power grid have also shown that the attacks are becoming increasingly sophisticated. For example, advanced persistent threats (APTs) \cite{Samuel2010,McMillan2010}, such as Stuxnet, Flame, and Duqu, can exploit zero-day vulnerabilities, leverage human errors and insider threats, and move stealthily in the network before launching a successful attack. These attacks are often difficult to detect and prevent as they have access to a sufficient amount of resources and are capable of staying in the victim's system for years. 

Hence, cyber risks for infrastructures are of growing concerns. The cyber risk will not only create cyber incidents including identity theft, cyber extortion, and network disruption, but also can lead to the malfunction of the entire infrastructure and its key services to users and customers.  It becomes a critical issue for operators to safeguard infrastructures from the intentional and unintentional actions that would inflict damage on the system.   Conventional countermeasures include installing intrusion detections, blacklisting malicious hosts, filtering/blocking traffic into the network.  However, these methods cannot guarantee perfect security and can be evaded by sophisticated adversaries despite the advances in technologies. Therefore, cyber risks are inevitable and it is essential to find other means to mitigate the risks and impact. 

Cyber insurance is an important tool in risk management to transfer risks \cite{zhang2017bi,zhang2019flipin}. Complement to the technological solutions to cybersecurity, cyber insurance can mitigate the loss of the targeted system and increase the resiliency of the victim by enabling quick financial and system recovery from cyber incidents. Such scheme is particularly helpful to small and medium size infrastructure systems that cannot afford a significant investment in cyber protection. The market of cyber insurance is still in its infancy. U.S. penetration level of the insured is less than 15\%. Promisingly, the market is growing fast at a 30\% annual growth rate since 2011.  The key challenge with cyber insurance lies in the difficulty to assess different types of cyber risks and impact of the cyber incidents that are launched by resourceful adversaries who are stealthy and purposeful. The design of cyber insurance also needs to take into account moral hazards and adverse selection problems. The insured tend to lack in incentives to improve their security measures to safeguard against attacks. As the nodes in the cyber space are increasingly connected, the unprotected cyber risks can propagate to other uninsured nodes. With asymmetric information of the insured, the insurer also has tendency to increase the premium rates for higher risks, making the cyber insurance less affordable to end users. 

In this chapter, we aim to provide a baseline framework to understand the interactions among the players in the cyber insurance market and leverage it to design optimal cyber insurance for the infrastructure services. One key application of the framework is to provide assurance to the infrastructure managers and users and transfer their risks when the attack on power grid fails to provide electric power to a food processing plant, when cloud servers break down and fail to provide airline customer check-in information, and when the trains collide due to the communication systems fail. In the examples above, it is clear that the cyber insurance play a key role in mitigating the cyber risks that interconnect the communications and information systems of an infrastructure with their physical impact on the infrastructure or linked infrastructures. The interdependencies among the infrastructures and their operators and users can propagate the cyber risks and exacerbate the damages on the critical infrastructures. To this end, the understanding of the cyber insurance of interconnected players is the key to the holistic understanding of the risk management of interdependent infrastructures.

This chapter will first present a principal-agent game-theoretic model to capture the interactions between one insurer and one user. The insurer is deemed as the principal who does not have incomplete information about user's security policies. The user, which refers to the infrastructure operator or the customer, implements his local protection and pays a premium to the insurer. The insurer designs an incentive compatible insurance mechanism that includes the premium and the coverage policy, while the user determines whether to participate in the insurance and his effort to defend against attacks. The chapter will also focus on an attack-aware cyber insurance model by introducing the adversarial behaviors into the framework. The behavior of an attacker determines the type of cyber threats, e.g. denial of service (DoS) attacks, data breaches, phishing and spoofing. The distinction of threat types plays a role in determining the type of losses and the coverage policies. The data breaches can lead to not only financial losses but also damage of the reputations. The coverage may only cover certain agreed percentage of the financial losses. 

\section{Background}

The challenges of cyber security are not only technical issues but also economic and policy issues \cite{anderson2006economics}. Recently, the use of cyber insurance to enhance the level of security in cyber-physical systems has been studied \cite{kesan2005cyberinsurance,lelarge2008local}. While these works deal with externality effects of cyber security in networks, few of them take into account in the model the cyber attack from a malicious adversary to distinguish from classical insurance models. In \cite{pal2014will}, the authors have considered direct and indirect losses, respectively due to cyber attacks and indirect infections from other nodes in the network. However, the cyber attacks are taken as random inputs rather than a strategic adversary. The moral hazard model in economics literature \cite{holmstrom1979moral,holmstrom1982moral} deal with hidden actions from an agent, and aims to address the question: How does a principal design the agent's wage contract to maximize his effort? This framework is related to insurance markets and has been used to model cyber insurance \cite{bolot2009cyber} as a solution for mitigating losses from cyber attacks. In addition, in \cite{acemoglu2013network}, the authors have studied a security investment problem in a network with externality effect. Each node determines his security investment level and competes with a strategic attacker. Their model does not focus on the insurance policies and hidden-action framework. In this work, we enrich the moral-hazard type of economic frameworks by incorporating attack models, and provide a holistic viewpoint towards cyber insurance and a systematic approach to design insurance policies. The network effect on security decision process has been studied in \cite{miura2008security}. The authors have considered a variation of the linear influence networks model in which each node represents a network company and directed links model the positive or negative influence between neighbor nodes.

Game-theoretic models are natural frameworks to capture the adversarial and defensive interactions between players \cite{zhu2018multi,rass2017physical,zhuang2010modeling,miao2018hybrid,farhang2014dynamic,manshaei2013game,zhu2013deployment,zhang2017strategic,horak2017manipulating,huang2017large}.    Game theory can provide a quantitative measure of the quality of protection with the concept of Nash equilibrium where both defender and an attacker seek optimal strategies, and no one has an incentive to deviate unilaterally from their equilibrium strategies despite their conflict for security objectives. The equilibrium concept also provides a quantitative prediction of the security outcomes of the scenario the game model captures. There are various types of game models that can capture different class of adversaries. For example, games of incomplete information are suitable for understanding cyber deception \cite{pawlick2018modeling,zhang2017strategic,horak2017manipulating,pawlick2017game,pawlick2015deception,zhuang2010modeling}; dynamic games are useful for modeling cyber-physical system security \cite{xu2017secure,xu_game-theoretic_2017,xu_cross-layer_2016,xu2015cyber,huang2017large,chen2017dynamic,miao2018hybrid,yuan2013resilient}; and zerosum and Stackelberg games for security risk management  \cite{zhang2018game,zhang2015secure,wang2017detection,zhang2017game,pawlick_stackelberg_2016,pawlick_mean-field_2017}. In this work, we build a zerosum game between an attacker and a defender to quantify the cybersecurity risks associated with adversarial behaviors. This game model is nested in the principle-agent game models to establish an integrated framework that captures the defender, the attacker, and the insurer.

\section{Three-Person Game Framework for Cyber Insurance}

In this section, we introduce a principal-agent model for cyber insurance that involve users and insurers.
Users here can refer to an infrastructure operator that manages cyber networks that face threats from an attacker, making users vulnerable to data breaches, task failures, and severe financial losses. The objective of the users is to find an efficient way to mitigate the loss due to the cyber attacks. To this end, there are two main approaches. One is to deploy local protections, such as firewalls and intrusion detection systems (IDSs) \cite{raiyn2014survey,axelsson2000intrusion}, frequent change of passwords, timely software patching and proactive moving target defenses \cite{jajodia2011moving}. These defense mechanisms can reduce the success rate of the attacks, but cannot guarantee perfect network security for users.  There are still chances for the users to be hacked by the attackers.  The other approach is to adopt cyber-insurance. The users pay a premium fee so that the loss due to cyber attacks can be compensated by the insurer. This mechanism provides an additional layer of mitigation to reduce the loss further that the technical solutions of the first approach cannot prevent. To capture the two options in our framework, we allow users to decide their protection levels as well as their rational choice of participation in the insurance program.
 
 Attackers are the adversaries who launch cyber-attacks, such as node capture attacks\cite{tague2008modeling}  and denial of services (DoS) attacks \cite{jhaveri2012attacks}, to acquire private data from users or cause disruptions of the network services.  Hence, the objective of the attacker is to find an efficient attack strategy to inflict as much damage to the users as possible. We use attack levels to represent different attack strategies to capture various types of attacks of different levels of severity. 
A higher attack level is more costly to launch, but it will create more severe damage. Since the loss of the users not only depends on the attack strategies but also insurance policies. The optimal strategy of the attacker will also be influenced by the coverage levels of an insurance policy. 

An insurer is a person or company that underwrites an insurance risk by providing users an incentive compatible cyber-insurance policy that includes a premium and the level of coverage. The premium is a subscription fee that is paid by the users to participate in the insurance program while the coverage level is the proportion of loss that will be compensated by the insurer as a consequence of successful cyber attacks. The insurers have two objectives. One is to make a profit from providing the insurance, and the other one is to reduce the average losses of the users, which is also directly related to the cost of the insurer. An insurer's problem is to determine the subscription fee and the coverage levels of the insurance. Note that the average losses depend on both users' local protection levels and attackers' attack levels. Moreover, the rational users will only enroll in the insurance when the average reduction in the cost is higher than or equal to the premium he paid to the insurer.
%

The objectives of users, attackers, and insurers, and the effects of their actions are all intertwined. We use a 3-player game to capture the complex interactions among the three parties. The conflicting objectives of a user and an attacker can be captured by a local game at each node in which the user determines a defense strategy while the adversary chooses an attack strategy. The outcome of the local interactions at each node determines its cyber risk and the cyber insurance is used as an additional method to further reduce the loss due to the cyber risk. The insurers are the leaders or principals in the framework who design insurance policies for the users while the users can be viewed as followers or agents who determine their defense strategies under a given insurance policy. 

\begin{figure}[]
\centering
\includegraphics[width=0.8\textwidth]{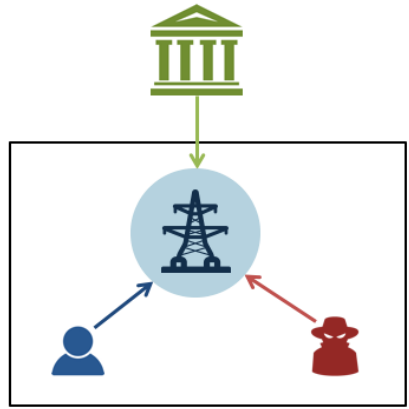}
\vspace{-25mm}
\caption{Three-player game among user, attacker, and the insurer.}
\label{3playergame}
\end{figure}

\subsection{Attack-Aware Cyber Insurance}

We first formulate the game between the user and the attacker, then we describe the insurer's problem under the equilibrium of the user and the attacker's game. An illustration of the cyber-insurance model is shown in Fig. \ref{fig:1U1A1IExample}. 
\begin{figure}[]
\centering
\includegraphics[width=0.8\textwidth]{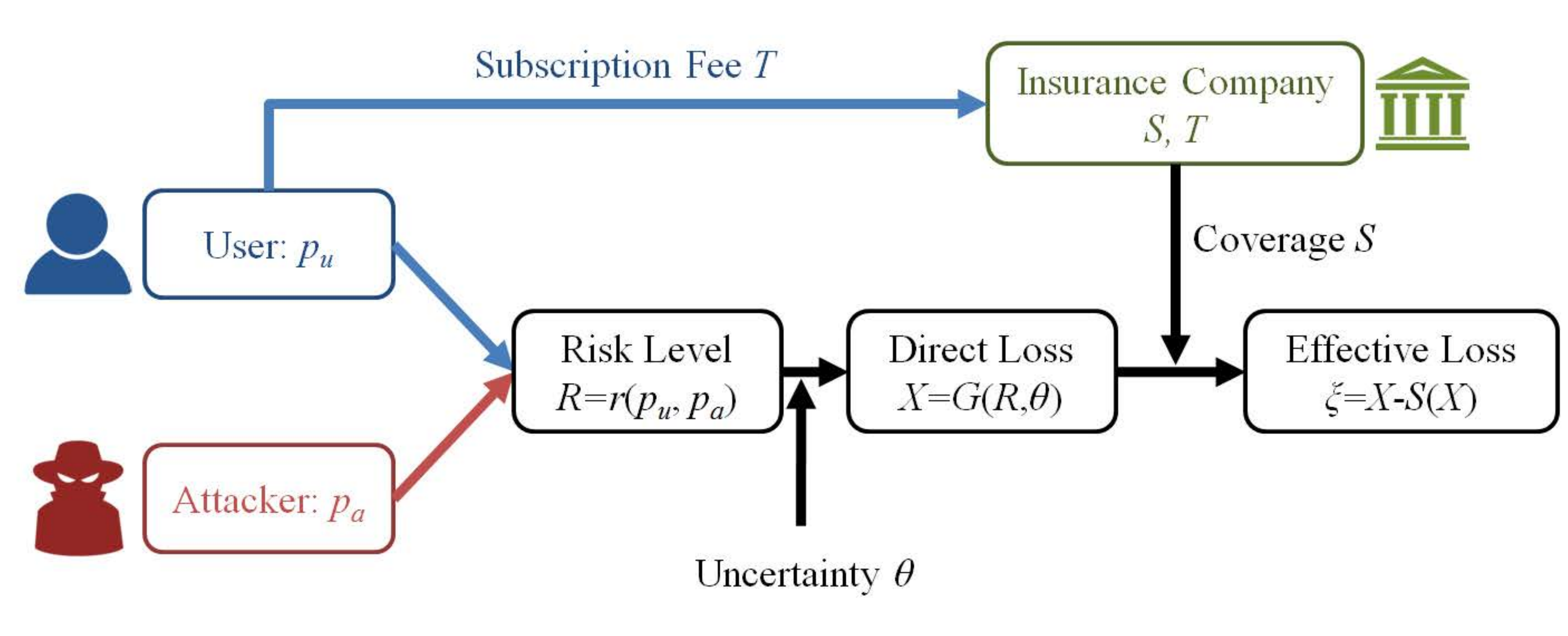}
\caption{Illustration of the interactions between three players: The action pair $(p_u, p_a)$ chosen by the user and the attacker results in a risk level not directly observable by the insurer. The insurer designs an insurance policy that includes a premium subscription fee and the coverage level to cover part of the loss due to the cyber attack.}
\label{fig:1U1A1IExample}
\end{figure}
Let $p_u \in [0,1]$ and $p_a \in [0,1]$ denote the local protection level of the user and the attack level of the attacker. On one hand, a large $p_u$ indicates a cautious user while a small $p_u$ indicates that the user is reckless. A reckless user may click on suspicious links of received spam emails, fail to patch the computer system frequently, and leave cyber footprints for an adversary to acquire system information. On the other hand, a large $p_a$ indicates a powerful attacker, and a small $p_a$ indicates a powerless attacker. The abstraction of using $p_u$ and $p_a$ captures the effectiveness of a wide range of heterogeneous defense and attack strategies without a fine-grained modeling of individual mechanisms. This will allow us to focus on the consequence of security issues and the choice of a mechanism that induces the result.

The action pair of the user and the attacker $(p_u,p_a)$ determines the risk level of the user $R\in \mathbb{R}_{\geq 0}$. A larger $p_u$ and a smaller $p_a$ indicate a higher risk level of the user. We use the following risk function $r$ to denote the connections between the user's and the attacker's actions and the risk level of the user.  
\begin{definition}
\label{def:RiskLevel}
Function $r(p_u,p_a):[0,1]^2\rightarrow \mathbb{R}_{\geq 0}$ gives the risk level $R$ of the user with respect to the user's local protection level $p_u$ and the attack's attack level $p_a$. Moreover, it is assumed to be continuous on $(0,1]^2$, convex and monotonically decreasing on $p_u \in [0,1]$, and concave and monotonically increasing in $p_a \in [0,1]$. 
\end{definition}
Note that the monotonicity in $p_u \in [0,1]$ indicates that a larger local protection level of user leads to a smaller risk level of user while the monotonicity in $p_a \in [0,1]$ indicates that a larger attack level of attacker leads to a larger risk level of user. Since $r$ is convex on $p_u$, the risk decreases smaller when the user adopts larger local protection level. Since $r$ is concave on $p_a$, the risk increases faster when the attacker conducts a higher attack level. Without loss of generality, we use the following risk function,
\begin{equation}
\label{eq:Risk}
r(p_u,p_a)= \log\left(\frac{p_a}{p_u}+1\right).
\end{equation}
Similar types of functions have also been widely used in jamming attacks in wireless networks \cite{manshaei2013game,altman2007jamming} and rate control problems \cite{kelly1998rate,zhu2012guidex}. Under the risk level of $R$, the economic loss of the user can be represented as a random variable $X$ measured in dollars, which can be expressed as $X={G}(R,\theta)$, where $\theta$ is a random variable with probability density function $g$ that captures the uncertainties in the measurement or system parameters. For example, a data breach due to the compromise of a server can be a consequence of low security level at the user end. The magnitude of the loss depends on the content and the significance of the data, and the extent of the breach. The variations in these parameters are captured by the random variable $\theta$. Since the risks of being attacked cannot be perfectly eliminated, the user can transfer the remaining risks to the third party, the insurer, by paying a premium or subscription fee $T$ for a coverage of $S(X)$ when he faces a loss of $X$, where $S:\mathbb{R}_{\geq 0}\rightarrow\mathbb{R}_{\geq 0}$ is the payment function that reduces the loss of the user if he is insured. Thus, the effective loss $\xi$ to the user becomes $\xi =X-S(X)$.

Given the attacker's action $p_a$ and the insurer's coverage function $S$, the user aims to minimize the average effective loss by finding the optimal local protection level $p_u^*$. Such objective can be captured by the following optimization problem
\begin{equation}
\label{eq:UserMin}
\min\limits_{p_u \in [0,1]} \mathbb{E}[H(\xi)]=\mathbb{E}[H(X-S(X))],
\end{equation} 
where $H:\mathbb{R}_{\geq 0} \rightarrow \mathbb{R}_{\geq 0}$ is the loss function of the user, which is increasing on $\xi$. Note that the expectation is taken with respect to the statistics of $\theta$. The subscription fee $T$ is not included in this optimization problem, as the fee is a constant decided by the insurer. 

The loss function $H(\xi)$ indicates the user's risk propensity. A convex $H(\xi)$ indicates that the user is risk-averse, i.e., the user cares more about the risk, while a concave $H(\xi)$ indicates that the user is risk-taking, i.e., he cares more about the cost, rather than the risk. A linear $H(\xi)$ in $\xi$ indicates that the user is risk-neutral. In this paper, we consider a risk-averse user, and use a typical risk-averse loss function that $H(\xi) = e^{\gamma\xi}$ with $\gamma>0$, where $\gamma$ indicates how much the user cares about the loss. 

Note that the cost function in (\ref{eq:UserMin}) can be expressed explicitly as a function of $X$. Thus, Problem (\ref{eq:UserMin}) can be rewritten by taking expectations with respect to the sufficient statistics of $X$. Let $f$ be the probability density function of $X$. Clearly, $f$ is a transformation from the density function $g$ (associated with the random variable $\theta$) under the mapping ${G}$. In addition, $g$ also depends on the action pair $(p_u,p_a)$ through the risk variable $R$. Therefore, we can write $f(x|p_u,p_a)$ to capture the parameterization of the density function. Without loss of generality, we assume that $X$ follows an exponential distribution, i.e., $X \sim \exp(\frac{1}{R})$,  where $R:=r(p_u,p_a)$ is the risk level of the user. The exponential distribution has been widely used in risk and reliability analysis\cite{minkova2010insurance,finkelstein2008failure,balakrishnan1996exponential,christoffersen2004backtesting}. Thus the density function can be written as 
\begin{eqnarray}
\nonumber f(x|p_u,p_a) &=& \frac{1}{R}e^{-\frac{1}{R}x}=\frac{1}{r(p_u,p_a)}e^{-\frac{1}{r(p_u,p_a)}x}
 \\
\nonumber &=&\frac{1}{\log(\frac{p_a}{p_u}+1)}e^{-\frac{1}{\log(\frac{p_a}{p_u}+1)}x}, \forall x\in\mathbb{R}_{\geq 0}.
\end{eqnarray}
The average amount of loss given actions $p_u$ and $p_a$ is $\mathbb{E}(X) = R = r(p_u,p_a)=\log(\frac{p_a}{p_u}+1)$. For small $p_u$ and  large $p_a$, the risk level of the user $R$ tends to be large, which leads to a large average loss of the user. We further assume that the insurance policy $S(X)$ is linear in $X$, i.e., $S(X)=sX$, where $s\in[0,1]$ indicates the coverage level of the insurance. Hence, the effective loss is given by $\xi = (1-s)X$. The average effective loss given the insurance coverage level $s$ and the action pair $(p_u,p_a)$ is $\mathbb{E}(\xi)=\mathbb{E}((1-s)X) = (1-s)\mathbb{E}(X) =(1-s)\log(\frac{p_a}{p_u}+1)$. When $s$ is large, the effective loss is small. As a result,  we arrive at
\begin{equation}
\label{eq:UserH}
\begin{array}{l}
\mathbb{E}[H(\xi)]: = \displaystyle\int_{x_i\in \mathbb{R}_{\geq 0}} H(x-S(x))f(x|p_u,p_a)dx \\ \textrm{\ \ \ \ \ \ \ \ \ \ } =\frac{1}{R}\displaystyle\int_{0}^{\infty} e^{[\gamma(1-s)-\frac{1}{R}]  x}  dx\\ \textrm{\ \ \ \ \ \ \ \ \ \ } =\frac{1}{1-\gamma(1-s)R} 
\\ \textrm{\ \ \ \ \ \ \ \ \ \ }=  \frac{1}{1-\gamma(1-s)\log(\frac{p_a}{p_u}+1) } .
\end{array}
\end{equation}
The loss is finite when
\begin{equation}
\label{eq:feasible}
\gamma(1-s)-\frac{1}{R}< 0, \textrm{\ i.e., \ }   1 -  \gamma (1-s)\log(\frac{p_a}{p_u}+1) > 0.
\end{equation}
Otherwise, the loss will be infinite, i.e., $\mathbb{E}[H(\xi)]\rightarrow \infty$. In this regime, no insurance scheme can be found to mitigate the loss. Condition (\ref{eq:feasible}) gives a feasible set of parameters under which cyber insurance is effective and provides a fundamental limit on the level of mitigation. Note that minimizing (\ref{eq:UserH}) is equivalent as minimizing $\gamma(1-s)\log(\frac{p_a}{p_u}+1)$ under the feasible equality (\ref{eq:feasible}). The user's problem can be rewritten as follows:
\begin{equation}
\label{eq:UserMinX}
\begin{array}{l}
\min\limits_{p_u \in [0,1]} J_u(p_u,p_a,s):= \gamma(1-s)R=\gamma(1-s)\log(\frac{p_a}{p_u}+1)\\ \begin{array}{cc}
{\textrm{s.t.}}&{1 -  \gamma (1-s)\log(\frac{p_a}{p_u}+1) > 0.}
\end{array}
\end{array}
\end{equation} 
Problem (\ref{eq:UserMinX}) captures the user's objective to minimize average effective loss given the attack level $p_a$ and the insurance coverage level $s$. On the other hand, the attacker aims to find the optimal attack level $p_a^*$ that maximizes the average loss of the user given user's local protection level and insurer's coverage level $s$. Such conflicting interests of the user and the attacker constitutes a zero-sum game, which takes the following minimax or max-min form, 
\begin{equation}
\label{eq:MinMax}
\begin{array}{ccc}
{\begin{array}{l}
\min\limits_{p_u\in [0,1]}\max\limits_{p_a \in [0,1]} K(p_u,p_a,s)\\ \begin{array}{cc}
{\textrm{s.t.}}&{(p_u,p_a)\in \mathcal{S}_{u,a}(s).}
\end{array} 
\end{array}}&{\textrm{\ \ \ or \ \ \ }}&{\begin{array}{l}
\max\limits_{p_a \in [0,1]}\min\limits_{p_u\in [0,1]} K(p_u,p_a,s)\\ \begin{array}{cc}
{\textrm{s.t.}}&{(p_u,p_a)\in \mathcal{S}_{u,a}(s).}
\end{array} 
\end{array}}
\end{array}
\end{equation}
where 
\begin{equation}
\label{eq:MinMaxK}
K(p_u,p_a,s):=\gamma(1-s)R + c_u p_u -  c_a p_a= \gamma(1-s)\log(\frac{p_a}{p_u}+1) + c_u p_u -  c_a p_a,
\end{equation}
\begin{equation}
\label{eq:MinMaxS}
\mathcal{S}_{u,a}(s):=\left\lbrace (p_u,p_a) \Big| 1 -  \gamma (1-s)\log(\frac{p_a}{p_u}+1) > 0   \right\rbrace.
\end{equation}
The first term of the objective function $K$ captures the average effective loss given insurance coverage level $s$, the local protection level $p_u$ and the attack level $p_a$. The second and third terms indicate the cost of the user and the attacker, respectively. $c_u\in \mathbb{R}_{>0}$ is the cost parameter of the user. A larger $c_u$ indicates that local protection is costly. $c_a\in \mathbb{R}_{>0}$ denotes the cost parameter of the attacker to conduct an attack level of $p_a$. A larger $c_u$ indicates that a cyber-attack is costly. Note that $c_u$ and $c_a$ can be interpreted as the market price of local protections and cyber-attacks, and they are known by the insurer. The constraint indicates the feasible set of the user. Note that if $s$, $p_u$, and $p_a$ are not feasible, $K$ is taken to be an infinite cost. Minimizing  $K(p_u,p_a,s)$ captures the user's objective to minimize the average effective loss with the most cost-effective local protection level. Maximizing $K(p_u,p_a,s)$ captures the attacker's objective to maximize the average effective loss of the user with least attack level. Note that the minimax form of (\ref{eq:MinMax}) can be interpreted as a worst-case solution for a user who uses the best security strategies by anticipating the worst-case attack scenarios.

Furthermore, Problem (\ref{eq:MinMax}) yields a saddle-point equilibrium (SPE) to the insurance coverage level $s$ which can be defined as follows:
\begin{definition}
\label{def:SaddlePoint}
Let $\mathcal{S}_u(s)$, $\mathcal{S}_a(s)$ and $\mathcal{S}_{u,a}(s)$ be the action sets for the user and the attacker given an insurance coverage level $s$. Then the strategy pair $(p_u^*,p_a^*)$ is a saddle-point equilibrium (SPE) of the zero-sum game defined by the triple $G_z:=\langle \{User,Attacker\},\{\mathcal{S}_u(s),\mathcal{S}_a(s),\mathcal{S}_{u,a}(s)\},K\rangle$, if for all $p_u \in \mathcal{S}_u(s),p_a \in \mathcal{S}_a(s),(p_u,p_a)\in\mathcal{S}_{u,a}(s)$, 
\begin{equation}
\label{eq:DefSaddlePoint}
\begin{array}{cr}
{K(p_u^*,p_a,s)\leq  K(p_u^*,p_a^*,s) \leq  K(p_u,p_a^*,s),}&{}
\end{array}
\end{equation}
where $K$ and $\mathcal{S}_{u,a}(s)$ is the objective function and feasible set defined in (\ref{eq:MinMaxK}) and (\ref{eq:MinMaxS}).
\end{definition}
The definition indicates that if a pair $(p_u^*,p_a^*)$ satisfies (\ref{eq:DefSaddlePoint}), then it is a SPE of the game between the user and the attacker to the insurer's insurance policy. Note that under a given insurance coverage level $s$, $(p_u^*,p_a^*)$ must satisfy the feasible constraint (\ref{eq:feasible}). Thus, we aim to look for a constrained SPE of the zero-sum game with coupled constraints on the strategies of the players.
\begin{proposition}
\label{pro:SaddlePointSol}
Given an insurance coverage level $s$ that satisfies 
\begin{equation}
\label{eq:SaddlePointFea}
1 -  \gamma (1-s)\log\left(\frac{c_u}{c_a}+1\right) > 0,
\end{equation}
there exists a unique SPE of the zero-sum game defined in Definition \ref{def:SaddlePoint}, given by
\begin{equation}
\label{eq:SaddlePointSol}
\begin{array}{cc}
{p_u^* = \frac{\gamma (1-s)}{c_u+c_a},}&{p_a^* = \frac{c_u \gamma (1-s)}{c_a(c_u+c_a) }.}
\end{array}
\end{equation}
\end{proposition}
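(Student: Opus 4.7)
The plan is to solve this via first-order conditions and then verify the saddle-point inequality directly, using the convexity/concavity structure of $K$. First I would record the partial derivatives on the interior $(0,1]^2$. A short calculation gives
\begin{equation*}
\frac{\partial K}{\partial p_u} = -\frac{\gamma(1-s)\,p_a}{p_u(p_u+p_a)} + c_u, \qquad \frac{\partial K}{\partial p_a} = \frac{\gamma(1-s)}{p_u+p_a} - c_a,
\end{equation*}
and the second derivatives $\partial^2 K/\partial p_u^2 = \gamma(1-s)p_a(2p_u+p_a)/[p_u(p_u+p_a)]^2 > 0$ and $\partial^2 K/\partial p_a^2 = -\gamma(1-s)/(p_u+p_a)^2 < 0$. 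Hence $K(\cdot,p_a,s)$ is strictly convex in $p_u$ and $K(p_u,\cdot,s)$ is strictly concave in $p_a$, matching the curvature assumed of $r$ in Definition \ref{def:RiskLevel}. I would also note that $\mathcal{S}_{u,a}(s)$ is convex, being the strict sublevel set of the convex function $(p_u,p_a)\mapsto \gamma(1-s)\log(p_a/p_u+1)$.

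Next I would solve the system $\partial K/\partial p_u = \partial K/\partial p_a = 0$. The second equation immediately yields $p_u+p_a = \gamma(1-s)/c_a$. Substituting this into the first gives $\gamma(1-s)p_a = c_u p_u \cdot \gamma(1-s)/c_a$, i.e.\ $p_a = (c_u/c_a)p_u$. Combining the two relations produces exactly the candidate pair in (\ref{eq:SaddlePointSol}). At this candidate, $p_a^*/p_u^* + 1 = c_u/c_a + 1$, so the feasibility inequality in $\mathcal{S}_{u,a}(s)$ reduces to precisely the standing hypothesis (\ref{eq:SaddlePointFea}); in particular the candidate lies in the interior of the feasible set and both coordinates are strictly positive.

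To upgrade the critical point to a constrained SPE, I would invoke the strict convexity-concavity of $K$ together with the convexity of $\mathcal{S}_{u,a}(s)$. Fix $p_a = p_a^*$; then $p_u \mapsto K(p_u,p_a^*,s)$ is strictly convex with a unique interior stationary point $p_u^*$, so $K(p_u^*,p_a^*,s) \le K(p_u,p_a^*,s)$ for every admissible $p_u$. Symmetrically, $p_a \mapsto K(p_u^*,p_a,s)$ is strictly concave with unique maximiser $p_a^*$, giving the other half of (\ref{eq:DefSaddlePoint}). Uniqueness follows because any other SPE would also have to be an interior critical point (strict convexity/concavity rule out boundary saddle points along either slice), but the KKT system has only one solution.

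The main obstacle I anticipate is the coupling in the constraint set $\mathcal{S}_{u,a}(s)$: in general, coupled constraints can destroy the standard decoupled saddle-point argument. The key observation that removes the difficulty is that (\ref{eq:SaddlePointFea}) makes $(p_u^*,p_a^*)$ lie strictly inside $\mathcal{S}_{u,a}(s)$, so the constraint is inactive at the equilibrium and the unconstrained saddle-point inequality extends to the whole admissible set by the convexity/concavity slicing above. A secondary technical point, worth flagging rather than grinding through, is the requirement $p_u^*,p_a^*\in[0,1]$: positivity is immediate, while the upper bounds must be read as part of the regime of parameters $(\gamma,s,c_u,c_a)$ in which the model is meaningful.
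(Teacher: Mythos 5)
Your derivation is correct and follows essentially the same route as the paper: the paper's Appendix A likewise works from the first-order conditions (it records the attacker's best response $P_a^*(p_u)=\gamma(1-s)/c_a-p_u$, which is exactly your second stationarity equation) and intersects the best responses, and your candidate point, curvature checks, and verification that feasibility at the candidate reduces to (\ref{eq:SaddlePointFea}) all match. One justification in your write-up is wrong, though inessential: $(p_u,p_a)\mapsto\log(p_a/p_u+1)$ is \emph{not} jointly convex, since its Hessian has determinant $-1/\bigl(p_u^2(p_u+p_a)^2\bigr)<0$; the set $\mathcal{S}_{u,a}(s)$ is nevertheless convex because the constraint is equivalent to the linear condition $p_a<\bigl(e^{1/(\gamma(1-s))}-1\bigr)p_u$, i.e.\ the function is quasiconvex in the relevant sense, and in any case your slicing argument only uses the section-wise structure. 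Your caveat about the upper bounds $p_u^*,p_a^*\le 1$ is fair; the paper leaves that parameter restriction implicit as well.
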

Proposition \ref{pro:SaddlePointSol} shows that the SPE of the zero-sum game between the user and the attacker is related to the insurer's policy $s$. Note that when $s$ is large, both the $p_u^*$ and $p_a^*$ is small, indicating that both the user and the attacker will take weak actions. Moreover, we have the following observations regarding the SPE.
\begin{remark}[Peltzman Effect]
\label{rem:Peltzman}
When the insure provides higher coverage level $s$, the SPE of the user $p_u^*$ tend to be smaller, i.e., the user takes a weaker local protection. Such risky behavior of the user in response to insurance is usually referred as Peltzman effect \cite{peltzman1975effects}. 
\end{remark}
\begin{corollary}[Invariability of The SPE Ratio]
\label{rem:attackeruser}
The SPE satisfies $
\frac{p_a^*}{p_u^*}  = \frac{c_u}{c_a}$, i.e., the ratio of the actions of the user and the attacker is only related to $c_u$ and $c_a$, and it is independent of the insurer's policy $s$. In particular, when $c_u = c_a$, $\frac{p_a^*}{p_u^*} = 1$, i.e., the SPE becomes symmetric, as $p_u^*=p_a^* = \frac{\gamma (1-s)}{c_u+c_a}=\frac{\gamma (1-s)}{2c_u}=\frac{\gamma (1-s)}{2c_a}$.
\end{corollary}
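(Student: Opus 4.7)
The plan is to invoke Proposition \ref{pro:SaddlePointSol} directly and verify the ratio by a short algebraic calculation. Since the statement is a corollary, the main work has already been done in establishing the closed form of the unique SPE, and what remains is to read off the structural consequence.

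First, I would recall from Proposition \ref{pro:SaddlePointSol} that, for any insurance coverage $s$ satisfying the feasibility condition (\ref{eq:SaddlePointFea}), the unique SPE is $p_u^* = \frac{\gamma(1-s)}{c_u+c_a}$ and $p_a^* = \frac{c_u\gamma(1-s)}{c_a(c_u+c_a)}$. I would then form the ratio $p_a^*/p_u^*$ and cancel the common factor $\gamma(1-s)/(c_u+c_a)$ that appears in both expressions. Everything involving the coverage $s$, the risk-aversion parameter $\gamma$, and the sum $c_u+c_a$ drops out, leaving precisely $c_u/c_a$. This establishes both claims at once: the identity $p_a^*/p_u^* = c_u/c_a$ and its independence of $s$.

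For the symmetric case $c_u = c_a$, I would simply substitute into the formulas from Proposition \ref{pro:SaddlePointSol}: the denominator becomes $2c_u = 2c_a$, and the numerator of $p_a^*$ reduces to $c_u\gamma(1-s) = c_a\gamma(1-s)$, so $p_u^* = p_a^* = \frac{\gamma(1-s)}{2c_u} = \frac{\gamma(1-s)}{2c_a}$, and the ratio becomes unity.

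There is essentially no obstacle here, since the proof is a one-line substitution into an already-established closed form; the corollary's content is interpretive rather than technical. The only mild subtlety worth a remark is the need to ensure $p_u^* \neq 0$ so that the ratio is well-defined, which is guaranteed by $\gamma > 0$, $c_u,c_a > 0$, and $s < 1$ (the borderline case $s=1$ forces $p_u^* = p_a^* = 0$, which is why the ratio statement is naturally phrased for $s < 1$, as the author's commented-out clause suggests).
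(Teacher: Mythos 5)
Your proposal is correct and matches the paper's (implicit) argument exactly: the corollary follows by substituting the closed-form SPE from Proposition \ref{pro:SaddlePointSol} and cancelling the common factor $\gamma(1-s)/(c_u+c_a)$. Your remark on the $s<1$ requirement for the ratio to be well-defined is a sensible addition consistent with the paper's own commented-out caveat about the $s=1$ case.
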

\begin{remark}[Constant Cost Determined SPE Risk]
\label{rem:SPRisk}
The user has a constant saddle-point risk level $R^*=r(p_u^*,p_a^*)=\log \left(\frac{p_a^*}{p_u^*}+1\right) =\log \left(\frac{c_u}{c_a}+1\right) $ at the equilibrium, which is determined by the costs of adopting protections and launching attacks. The ratio is independent of coverage level $s$.
\end{remark}
\begin{corollary}
\label{rem:Exloss}
At the saddle point, the average direct loss of the user is $
\mathbb{E}(X) = R^* = \log\left(\frac{c_u}{c_a}+1\right)$, the average effective loss of the user is $
\mathbb{E}(\xi) = \mathbb{E}((1-s)X) = (1-s)\mathbb{E}(X) =(1-s)R^*= (1-s)\log\left(\frac{c_u}{c_a}+1\right)$, the average payment of the insurer to the user is $\mathbb{E}(sX) = s\mathbb{E}(X) =sR^*= s\log\left(\frac{c_u}{c_a}+1\right)$.

\end{corollary}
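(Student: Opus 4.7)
The plan is to combine three ingredients already established in the excerpt: (i) the distributional assumption $X\sim\exp(1/R)$ with mean $\mathbb{E}(X)=R$, (ii) the explicit SPE $(p_u^*,p_a^*)$ from Proposition \ref{pro:SaddlePointSol}, and (iii) the linearity of the payment map $S(x)=sx$ and hence of $\xi=(1-s)X$. Each of the three equalities in the corollary is then obtained by a substitution, so the argument is essentially a one-line computation per claim once the correct risk level at the saddle point is identified.

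First I would invoke Corollary \ref{rem:attackeruser} (or equivalently directly plug the formulas of Proposition \ref{pro:SaddlePointSol}) to compute the ratio $p_a^*/p_u^* = c_u/c_a$, and then use the definition $r(p_u,p_a)=\log(p_a/p_u+1)$ to get the saddle-point risk level
\begin{equation*}
R^* = r(p_u^*,p_a^*) = \log\!\left(\frac{p_a^*}{p_u^*}+1\right) = \log\!\left(\frac{c_u}{c_a}+1\right),
\end{equation*}
which is the content of Remark \ref{rem:SPRisk}. Because $X\sim\exp(1/R^*)$, the mean of an exponential random variable gives $\mathbb{E}(X)=R^*=\log(c_u/c_a+1)$, proving the first assertion.

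Next I would apply linearity of expectation to the effective loss $\xi=(1-s)X$ to obtain $\mathbb{E}(\xi)=(1-s)\mathbb{E}(X)=(1-s)R^*=(1-s)\log(c_u/c_a+1)$, and to the insurer payout $S(X)=sX$ to obtain $\mathbb{E}(S(X))=s\mathbb{E}(X)=sR^*=s\log(c_u/c_a+1)$. These two steps use only the linearity of $S$ and of $\mathbb{E}$, together with the value of $\mathbb{E}(X)$ already computed.

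The only subtle point—and the closest thing to an obstacle—is making sure that the SPE values from Proposition \ref{pro:SaddlePointSol} actually lie in the feasible set $\mathcal{S}_{u,a}(s)$, so that the exponential distribution with rate $1/R^*$ is well-defined and its mean is finite. This is guaranteed precisely by the condition $1-\gamma(1-s)\log(c_u/c_a+1)>0$ appearing in (\ref{eq:SaddlePointFea}), which must therefore be invoked as a standing hypothesis (matching the ``$s\neq 1$''-style provisos kept in the commented-out variants). Under that condition all three identities follow immediately, completing the proof.
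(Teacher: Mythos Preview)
Your proposal is correct and follows exactly the (implicit) reasoning of the paper: the corollary is stated there without a separate proof because it is an immediate consequence of Remark~\ref{rem:SPRisk} (giving $R^*=\log(c_u/c_a+1)$), the exponential assumption $\mathbb{E}(X)=R$, and linearity of expectation applied to $\xi=(1-s)X$ and $S(X)=sX$. Your only addition---explicitly noting that the feasibility condition (\ref{eq:SaddlePointFea}) is needed so that the SPE exists and the expectations are finite---is appropriate and matches the paper's standing hypotheses.
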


Corollary \ref{rem:attackeruser} indicates the constant saddle-point strategy ratio of the user and the attacker, which is determined only by the cost parameters $c_u$ and $c_a$, i.e., the market prices or costs for applying certain levels of protections and attacks, respectively. As a result, the saddle-point risk level of the user is constant, and only determined by the market as shown in Remark \ref{rem:SPRisk}. Thus, the average direct loss is constant as shown in Corollary \ref{rem:Exloss}. However, when the insurance coverage level $s$ does not satisfy (\ref{eq:SaddlePointFea}), the insurability of the user is not guaranteed, which is shown in the following proposition.
\begin{proposition}[Fundamental Limits on Insurability]
\label{pro:Fun}
Given an insurance coverage level $s$ that $1-\gamma(1-s)\log\left(\frac{c_u}{c_a}+1\right) \leq 0$, $(p_u^*,p_a^*)$ does not satisfy the feasible inequality (\ref{eq:feasible}), thus, the average direct of the user $\mathbb{E}(X)\rightarrow \infty$, and the zero-sum game defined in Definition \ref{def:SaddlePoint} does not admit a SPE. Thus, the user is not insurable, as the insurance policy cannot mitigate his loss. The insurer will not also provide insurance to the user who is not insurable.
\end{proposition}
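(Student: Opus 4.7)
The plan is a short three-step argument that leverages the closed-form saddle point from Proposition \ref{pro:SaddlePointSol} and the explicit integral form of the expected loss. First I will verify infeasibility of the candidate by direct substitution; second I will show that the expected loss diverges outside the feasible set; third I will rule out any alternative saddle point by combining uniqueness of the interior critical point with the attacker's ability to push the game past feasibility.

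For the first step I would plug $p_u^*=\gamma(1-s)/(c_u+c_a)$ and $p_a^*=c_u\gamma(1-s)/[c_a(c_u+c_a)]$ directly into (\ref{eq:feasible}). Since $p_a^*/p_u^*=c_u/c_a$, the risk at the candidate is $R^*=\log(c_u/c_a+1)$, and (\ref{eq:feasible}) evaluated at $(p_u^*,p_a^*)$ reduces to $1-\gamma(1-s)\log(c_u/c_a+1)>0$, which is the strict opposite of the hypothesis. Hence $(p_u^*,p_a^*)\notin\mathcal{S}_{u,a}(s)$. For the second step I would reuse (\ref{eq:UserH}): the integrand $(1/R)\,e^{[\gamma(1-s)-1/R]\,x}$ is integrable on $[0,\infty)$ if and only if $\gamma(1-s)-1/R<0$; whenever the feasibility constraint fails we have $\gamma(1-s)R\geq 1$, so the integral diverges to $+\infty$. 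This gives $\mathbb{E}[H(\xi)]=+\infty$ at the candidate, which I take as the intended reading of the statement's ``$\mathbb{E}(X)\to\infty$''.

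For the third step I would argue non-existence of any SPE in $\mathcal{S}_{u,a}(s)$ in two parts. Because $\mathcal{S}_{u,a}(s)$ is open, any interior saddle point must satisfy $\partial K/\partial p_u=0$ and $\partial K/\partial p_a=0$; the same calculation used in Proposition \ref{pro:SaddlePointSol} shows these equations admit the unique solution $(p_u^*,p_a^*)$, which Step 1 already ruled out. For ``boundary-type'' candidates, I would invoke Step 2: for any interior $p_u$ the user might commit to, the attacker can choose $p_a$ that drives $(p_u,p_a)$ across the feasibility boundary, at which point the original loss $\mathbb{E}[H(\xi)]$ is $+\infty$. Thus the upper value $\sup_{p_a}\inf_{p_u}K$ is $+\infty$ while the lower value is bounded at any fixed feasible $(p_u,p_a)$, so the minimax and maximin values disagree and no saddle point exists. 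The final claim that the user is uninsurable then follows immediately: no finite premium can underwrite an infinite expected loss, so a rational insurer declines coverage.

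The hard part will be the last piece of Step 3, namely formally showing that whenever the hypothesis holds the attacker can actually reach the infeasible region from every $p_u\in[0,1]$. This is nontrivial because the feasibility constraint couples $p_u$ and $p_a$: for small $p_u$ a modest $p_a$ suffices, but for $p_u$ near $1$ the attacker may need $p_a$ outside $[0,1]$. My proposed resolution is a short case analysis based on the threshold $p_u^\star:=1/\bigl(e^{1/[\gamma(1-s)]}-1\bigr)$, combined with the hypothesis $\gamma(1-s)\log(c_u/c_a+1)\geq 1$, to certify that the attacker's infeasibility-inducing range within $[0,1]$ is nonempty for every user choice. Once that verification is in place, the upper and lower values of the zero-sum game disagree, the SPE cannot exist, and the proposition follows.
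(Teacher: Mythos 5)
The paper does not print a proof of this proposition; the intended argument is evidently just your Steps~1 and~2: substitute the candidate $(p_u^*,p_a^*)$ from Proposition~\ref{pro:SaddlePointSol}, use $p_a^*/p_u^*=c_u/c_a$ to reduce (\ref{eq:feasible}) to $1-\gamma(1-s)\log(c_u/c_a+1)>0$, observe this contradicts the hypothesis, and conclude that the integral in (\ref{eq:UserH}) diverges. Those two steps of yours are correct, and your observation that the divergence is really in $\mathbb{E}[H(\xi)]$ rather than in $\mathbb{E}(X)=R^*=\log(c_u/c_a+1)$ (which stays finite) is a legitimate correction of the statement's wording.

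The genuine gap is in Step~3, the part you yourself flagged as hard. Your plan to show the upper and lower values disagree rests on the claim that, for every $p_u\in[0,1]$, the attacker can reach the infeasible region with some $p_a\in[0,1]$. By your own threshold this requires $p_u\le p_u^\star=1/\bigl(e^{1/[\gamma(1-s)]}-1\bigr)$ for all $p_u\in[0,1]$, i.e.\ $p_u^\star\ge 1$, i.e.\ $\gamma(1-s)\ge 1/\log 2$. The hypothesis of the proposition only gives $\gamma(1-s)\ge 1/\log(c_u/c_a+1)$, which is strictly weaker whenever $c_u>c_a$. Concretely, take $\gamma(1-s)=1/2$ and $c_u/c_a=10$: the hypothesis holds since $\log(11)>2$, but infeasibility requires $p_a\ge p_u(e^2-1)\approx 6.39\,p_u$, so for any $p_u>0.16$ no $p_a\in[0,1]$ leaves the feasible set and $\max_{p_a}K(p_u,\cdot,s)$ is finite. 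Hence the proposed "certification that the attacker's infeasibility-inducing range is nonempty for every user choice" cannot be carried out from the stated hypothesis, and the disagreement-of-values argument collapses in that regime (you would instead have to rule out boundary saddle points such as $p_u$ near $1$ directly, via the first-order conditions). A smaller point: you call $\sup_{p_a}\inf_{p_u}K$ the upper value; it is the lower value, and it is the quantity $\inf_{p_u}\sup_{p_a}K$ that your construction would drive to $+\infty$. If you restrict Step~3 to the statement that the \emph{unique} candidate produced by the first-order conditions is infeasible — which is all the paper uses — the proof is fine; the stronger nonexistence claim needs either an added hypothesis or a separate boundary analysis.
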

\begin{proposition}
\label{pro:FunIns}
Under an insurable scenario, the cost parameter of the user must satisfy $c_u< c_a (1-e^{-\gamma(1-s)})$, and the local protection level of the user must satisfy $p_u>\frac{\gamma(1-s)}{c_a}e^{\gamma(1-s)}$.
\end{proposition}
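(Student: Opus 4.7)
The plan is to derive both stated inequalities as direct algebraic consequences of the insurability condition already stated in Propositions \ref{pro:SaddlePointSol} and \ref{pro:Fun}. The only data I need are (i) the closed-form SPE $p_u^* = \gamma(1-s)/(c_u+c_a)$ from Proposition \ref{pro:SaddlePointSol}, and (ii) the characterization from Proposition \ref{pro:Fun} that \emph{insurable} is exactly the regime in which the inequality (\ref{eq:SaddlePointFea}), $1-\gamma(1-s)\log(c_u/c_a+1)>0$, holds. Throughout, I will tacitly restrict to $s\in[0,1)$ so that $\gamma(1-s)>0$ and division/exponentiation preserve the direction of the inequality.

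First, I would rearrange the insurability condition to obtain the bound on $c_u$. Writing (\ref{eq:SaddlePointFea}) as $\log(c_u/c_a+1) < 1/[\gamma(1-s)]$, exponentiating, and then isolating $c_u$ gives an explicit upper bound of the form $c_u < c_a\,\bigl(F(\gamma(1-s))-1\bigr)$ for the appropriate function $F$. This is the first claimed inequality in the proposition, up to identifying $F$ from the algebra. Apart from tracking signs, this step is a one-line manipulation because $\log$ is monotone on $(0,\infty)$ and $\gamma(1-s)$ is strictly positive.

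Second, I would plug the resulting upper bound on $c_u$, and hence on $c_u+c_a$, into the saddle-point formula $p_u^* = \gamma(1-s)/(c_u+c_a)$. Since $p_u^*$ is strictly decreasing in $c_u+c_a$, an upper bound on $c_u+c_a$ translates into a lower bound of the form $p_u^* > \gamma(1-s)/[c_a\,F(\gamma(1-s))]$, which is the second claimed inequality once $F$ is identified with the appropriate exponential factor. This yields a closed-form floor on the user's equilibrium protection that an insurer must impose in order for the underlying zero-sum game between user and attacker to even admit a finite-loss SPE.

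The main obstacle I foresee is not difficulty but boundary bookkeeping: one must rule out $s=1$ (where the insurability constraint is trivial and $p_u^*$ in Proposition \ref{pro:SaddlePointSol} collapses to $0$), and one must confirm that the derived bound on $c_u$ is compatible with the standing positivity $c_u>0$ and with $p_u^*\in[0,1]$. The standing assumption $\gamma>0$ together with $s<1$ keeps the exponential factor bounded and nonzero, so both inequalities are well-defined and nontrivial; once that is checked, the argument reduces to the two-line chain of substitutions sketched above.
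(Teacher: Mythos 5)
Your overall plan is sound and, for the first inequality, coincides with the paper's: the authors likewise obtain the bound on $c_u$ by direct manipulation of (\ref{eq:SaddlePointFea}). For the second inequality your route is genuinely different: the paper substitutes the attacker's best response $P_a^*(p_u)=\frac{\gamma(1-s)}{c_a}-p_u$ (from its Appendix A) into the feasibility constraint (\ref{eq:feasible}), which constrains \emph{every} $p_u$ that faces a best-responding attacker, whereas you bound only the equilibrium value $p_u^*$ through the closed form of Proposition \ref{pro:SaddlePointSol}. The two computations happen to land on the same numerical threshold, so the difference is mainly one of scope (all feasible $p_u$ versus the SPE value); your version is slightly weaker but adequate if the proposition is read as a statement about the equilibrium.

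The genuine problem is the step you defer with ``up to identifying $F$.'' Carrying out the algebra, (\ref{eq:SaddlePointFea}) gives $\log(c_u/c_a+1)<1/(\gamma(1-s))$, hence $c_u<c_a\bigl(e^{1/(\gamma(1-s))}-1\bigr)$, and feeding this into $p_u^*=\gamma(1-s)/(c_u+c_a)$ gives $p_u^*>\frac{\gamma(1-s)}{c_a}\,e^{-1/(\gamma(1-s))}$. Neither expression matches what the proposition asserts, namely $c_u<c_a(1-e^{-\gamma(1-s)})$ and $p_u>\frac{\gamma(1-s)}{c_a}e^{\gamma(1-s)}$; at $\gamma(1-s)=1$, for instance, the derived bound on $c_u/c_a$ is $e-1\approx 1.72$ while the stated bound is $1-e^{-1}\approx 0.63$, so the two are not equivalent and the stated one does not follow from the derived one. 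The paper's own one-line proof suffers from the same mismatch (its substitution also yields the $e^{\pm 1/(\gamma(1-s))}$ forms), so the constants in the proposition appear to be in error; but taken as a proof of the statement as literally written, your argument cannot be completed. A careful write-up must either prove the (different) inequalities the algebra actually yields or explicitly flag the discrepancy, rather than absorbing it into an unspecified $F$.
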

\begin{proof}
The first inequality can be easily achieved from (\ref{eq:SaddlePointFea}). From Appendix A, given the action of the user $p_u$, the best action of the attacker is $P_a^*(p_u) = \frac{\gamma(1-s)}{c_a}-p_u$. By plugging $P_a^*(p_u)$ into the feasible inequality (\ref{eq:feasible}), we can get $p_u > \frac{\gamma(1-s)}{c_a}e^{\gamma(1-s)}$.
\end{proof}
It is important to note that the user must pay a subscription fee $T\in\mathbb{R}_{\geq 0}$ to be insured. The incentive for the user to buy insurance exists when the average loss at equilibrium under the insurance is lower than the loss incurred without insurance. If the amount of the payment from the insurer is low, then the user tends not to be insured. In addition, if the payment is low, then the risk for the insurer will be high and the user may behave recklessly in the cyber-space.
\subsection{Insurer's Problem}
The insurer announces the insurance policy $\{s,T\}$, where $s$ indicates the coverage level, $T$ indicates the subscription, and then the user's and the attacker's conflicting interests formulates a zero-sum game, which yields a unique solution as shown in Proposition \ref{pro:SaddlePointSol}, with the corresponding equilibrium loss as shown in Corollary \ref{rem:Exloss}. Note that $T$ is the gross profit of the insurer as he charges it from the user first, but when the user faces a loss $X$, the insurer must pay $sX$ to the user. The operating profit of the insurer can be captured as $T-sX$. The insurer cannot directly observe the actions of the user and the attacker. However, with the knowledge of the market, i.e., the cost parameters of the user $c_u$ and the attacker $c_a$, the insurer aims to minimize the average effective loss of the user while maximizing his operating profit. 

Recall Corollary \ref{rem:Exloss}, the average effective loss of the user at saddle-point is $\mathbb{E}(\xi)=(1-s)\mathbb{E}(X)=(1-s)R^* =(1-s)\log\left(\frac{c_u}{c_a}+1\right) $, which is monotonically decreasing on $s$. When the user is under full coverage, the average loss with the payment $T$ is $(1-s)R^* + T \big|_{s = 1} = T$. When the user does not subscribe to an insurance, the average direct loss is $R^*$. Thus, the user has no incentive to insure if the cost under fully coverage is higher than that under no insurance, i.e., $T > R^*$. Moreover, for $T\leq R^*$, the user will choose to insure if the average loss under the given coverage level $s$ is lower than under no insurance, i.e., $ (1-s)R^* + T \leq R^*$. Therefore, we arrive at the following conditions.  
\begin{condition}[Individual Rationality (IR-$u$)]
\label{con:UserT}
The subscription fee must satisfy $T\leq T_{max}:=R^*=\log\left(\frac{c_u}{c_a}+1\right)$, so that the user prefer to subscribe the insurance.
\end{condition}
\begin{condition}[Incentive Compatibility (IC-$u$)]
\label{con:UserS0}
For the subscription fee $T\leq T_{\max}$, the user will subscribe the insurance if the coverage level $s$ satisfies $s\geq s_0=\frac{T}{R^*} = \frac{T}{\log\left(\frac{c_u}{c_a}+1\right) } $.
\end{condition}
Inequalities in Condition \ref{con:UserT} and Condition \ref{con:UserS0} are known as individual rationality (IR-$u$) constraint and incentive compatibility (IC-$u$) constraint, respectively. The user will enroll only when (IR-$u$) and (IC-$u$) constraints are satisfied. Note that when $c_u$ is large and $c_a$ is small, i.e., the saddle-point risk level $R^*$ is high, $T_{\max}$ is large and $s_0(T)$ is small, i.e., when the cost of the user to put local protections is large, and the cost of the attacker to conduct cyber-attack is small, the price of the subscription fee is large, but the minimum coverage is low. Note that $s_0$ is  monotonically increasing on $T$, moreover, when $T=0$, $s=0$, i.e., the user will accept any coverage level when there is no charge for the insurance premium. When $T = T_{\max}$, $s=1$, i.e., the user only accept a full coverage when the subscription fee is the maximum. 
 
The insurer charges a subscription fee $T$ from the user, i.e., the insurer has a gross profit of $T$. However, the insurer also pays the user an average amount of $sR^*=s\log\left(\frac{c_u}{c_a}+1\right)$ from Corollary \ref{rem:Exloss}. Thus, the average operating profit of the insurer is $T - sR^*$, which must be larger than or equal to $0$ so that the insurer will provide the insurance. Thus, we have the following condition. 
\begin{condition}[Individual Rationality (IR-$i$)]
\label{con:InsurerProfit}
The insurer will provide the insurance if $T - sR^*=T - s\log\left(\frac{c_u}{c_a}+1\right) \geq 0$.
\end{condition}
Recall Proposition \ref{pro:Fun}, the insurer will provide the insurance when the user is insurable, i.e., inequality (\ref{eq:SaddlePointFea}) must be satisfied. Thus, we reach the following proposition that indicates the feasible coverage level. 
\begin{condition}[Feasibility (F-$i$)]
\label{con:InsurerS}
The coverage level $s$ is feasible, i.e., the user is insurable, when $s>  1 - \frac{1}{\gamma\log\left(\frac{c_u}{c_a}+1\right)}$. 
\end{condition}

Condition \ref{con:InsurerProfit} and Condition \ref{con:InsurerS} indicate the individual rationality constraint (IR-$i$) and the feasibility constraint (F-$i$) of the insurer, respectively. With the (IR-$u$) and (IC-$u$) constraints for the user and the (IR-$i$) and (F-$i$) constraints for the insurer, the insurer's objective to minimize the average effective loss of the user and maximize the operating profit can be captured using the following optimization problem:
\begin{equation}
\label{eq:InsurerMinICIR}
\begin{array}{l}
\min\limits_{\{0\leq s\leq 1,T\geq 0 \}} J_i(s,T):=\gamma(1-s)\log\left(\frac{c_u}{c_a}+1\right) + c_s(s\log(\frac{c_u}{c_a}+1)-T) \\ \begin{array}{cc}
{\textrm{s.t.}}&{\textrm{(IR-$u$)},\textrm{(IC-$u$)},\textrm{(IR-$i$)},\textrm{(F-$i$)}.}
\end{array}
\end{array}
\end{equation}
Note that the first term of the objective function is the average effective loss of the user under the coverage $s$, as the insurer also aims to reduce the loss of the user from the attacker. Minimizing the second term of the objective function captures the insurer's objective of making profit. Note that parameter $c_s$ indicates the trade-off of a safer user and a larger profit of the insurer. 

Furthermore, the solution of Problem (\ref{eq:InsurerMinICIR}) and the corresponding SPE defined in Definition \ref{def:SaddlePoint} yields an equilibrium for the bi-level game in Case 1 which can be defined as
\begin{definition}
\label{def:EquStrategy}
Let $\mathcal{S}_i$ be the action set for the insurer, $\mathcal{S}_u(s)$ and $\mathcal{S}_a(s)$ be the action sets for the user and the attacker given the insurance coverage level, the strategy pair $(p_u^*,p_a^*,\{s^*,T^*\})$ is called a bi-level game Nash equilibrium (BGNE) of the bi-level game in Case 1 defined by the triple $G_{\textrm{1}}:=\left< \{User,Attacker,Insurer\},\{\mathcal{S}_u(s),\mathcal{S}_a(s),\mathcal{S}_i\}, K,J_i \right>$, if $\{s^*,T^*\}$ solves Problem (\ref{eq:InsurerMinICIR}) with the BGNE objective function $J_i^*$, and the strategy pair $(p_u^*,p_a^*)$ is the SPE of the zero-sum game defined in Definition \ref{def:SaddlePoint} with the SPE objective function $K^*$ under the insurance policy $\{s^*,T^*\}$. 
\end{definition}
Note that the insurer's Problem (\ref{eq:InsurerMinICIR}) is a linear programming problem as the objective function and all the constraints are linear in $s$ and $T$. Instead of solving this problem, we first observe that (IR-$i$) and (IC-$u$) together indicate that the insurance policy $s$ and $T$ must satisfy 
\begin{equation}
\label{eq:IRIICu}
T =sR^*=s\log\left(\frac{c_u}{c_a}+1\right).
\end{equation}
\begin{corollary}
\label{rem:IRIICu}
Equality (\ref{eq:IRIICu}) indicates the following observations:
\begin{itemize}
\item[(i)] {\rm Zero Operating Profit Principle:} The insurer's operating profit is always $0$, as $T-sR^* = 0$.
\item[(ii)] {\rm Linear Insurance Policy Principle:} The insure can only provide the insurance policy $s$ and $T$ that satisfies (\ref{eq:IRIICu}), so that the user subscribes to the insurance and the insurer provides the insurance. 
\end{itemize}
\end{corollary}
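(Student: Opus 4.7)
The plan is to first establish the headline equality $T = sR^{*}$ and then read off the two observations as immediate consequences, since each of them is a one-line restatement of that equality.

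First, I would isolate the two constraints that pin down the relationship. Condition~\ref{con:InsurerProfit} (IR-$i$) can be rearranged as $T \geq sR^{*}$, since $R^{*} = \log(c_u/c_a + 1)$. Condition~\ref{con:UserS0} (IC-$u$) states that a user who is charged $T$ only enrolls when $s \geq s_0 = T/R^{*}$, which, multiplying through by $R^{*} > 0$, is equivalent to $sR^{*} \geq T$. Taken together these two one-sided inequalities collapse to the single equality $T = sR^{*} = s\log(c_u/c_a + 1)$, which is exactly (\ref{eq:IRIICu}). This is the only real step in the argument.

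From (\ref{eq:IRIICu}), part~(i) is immediate: the insurer's average operating profit is $T - sR^{*}$ by the preceding paragraph, and this evaluates to $0$. Part~(ii) is a direct interpretation: any admissible insurance policy $(s,T)$ lying in the feasible region carved out by simultaneously satisfying (IR-$i$) and (IC-$u$) must lie on the line $T = sR^{*}$ in the $(s,T)$-plane, so the insurer is constrained to offer premiums that scale linearly with the coverage level, with slope equal to the saddle-point risk $R^{*}$ determined by the market costs $c_u, c_a$.

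There is no genuine obstacle: the entire claim is algebraic, following from squeezing $T$ between the user's participation threshold and the insurer's break-even threshold. The only thing worth flagging is the implicit assumption $R^{*} > 0$, which holds whenever $c_u > 0$ so that the division by $R^{*}$ in the (IC-$u$) step is legitimate; this is guaranteed by the standing assumption $c_u, c_a \in \mathbb{R}_{>0}$.
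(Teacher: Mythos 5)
Your argument is correct and matches the paper's reasoning exactly: the equality (\ref{eq:IRIICu}) is obtained by squeezing $T$ between the (IR-$i$) lower bound $T \geq sR^{*}$ and the (IC-$u$) upper bound $sR^{*} \geq T$, after which both observations are immediate restatements. Your remark that $R^{*}>0$ (guaranteed by $c_u, c_a \in \mathbb{R}_{>0}$) justifies the division in the (IC-$u$) step is a worthwhile addition the paper leaves implicit.
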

Corollary 3 reveals a zero operating profit principle and a linear insurance policy principle for the insurer. These principles hold in Case 2 and 3 as well. With (\ref{eq:IRIICu}), the linear insurance policy indicates that the ratio of the subscription and the coverage level only depends on the saddle-point risk $R^*$, which is determined by the costs seen in Remark \ref{rem:SPRisk}. It provides a fundamental principle for designing the insurance policy.  

With (\ref{eq:IRIICu}), the optimal insurance for the insurer can be summarized using the following proposition. 
\begin{proposition}
\label{pro:OptimalInsurancePolicy}
The optimal insurance policy for the insurer is
\begin{equation}
\label{eq:OptimalInsurancePolicy}
s^*  = 1 \textrm{\ \ \ } T^* = T_{\max} = \log\left(\frac{c_u}{c_a}+1\right). 
\end{equation}
\end{proposition}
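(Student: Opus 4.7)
The plan is to exploit Corollary~\ref{rem:IRIICu} to collapse the insurer's two-variable linear program into a one-variable monotone optimization in $s$.

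First I would invoke the linear insurance policy principle: combining (IR-$i$), namely $T - sR^* \geq 0$, with (IC-$u$), namely $s \geq T/R^*$, forces the equality $T = sR^* = s\log\!\bigl(\tfrac{c_u}{c_a}+1\bigr)$. This immediately eliminates $T$ from the feasible set and zeros out the profit term $c_s(sR^* - T)$ in $J_i$. What remains of the objective is simply
\begin{equation*}
J_i(s, sR^*) \;=\; \gamma(1-s)\,\log\!\left(\tfrac{c_u}{c_a}+1\right),
\end{equation*}
which is strictly decreasing in $s$ on $[0,1]$.

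Next I would argue that the minimum is therefore attained at the largest admissible $s$. The remaining constraints to check are (F-$i$), i.e.\ $s > 1 - 1/\bigl[\gamma\log(c_u/c_a + 1)\bigr]$, the box constraint $s \leq 1$, and (IR-$u$), i.e.\ $T \leq T_{\max} = R^*$. At $s = 1$ the value of $T$ dictated by the linear policy is exactly $T = R^* = T_{\max}$, so (IR-$u$) holds with equality; (F-$i$) is satisfied strictly because the right-hand side is always less than $1$; and $T = R^* \geq 0$. Thus $s^* = 1$ is feasible and, by monotonicity, optimal, giving $T^* = R^* = \log\!\bigl(\tfrac{c_u}{c_a}+1\bigr)$ as claimed.

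I do not expect a serious obstacle: the hard structural work has already been done in Corollary~\ref{rem:IRIICu}, and once the objective has been reduced to a monotone function of a single variable the conclusion is immediate. The only subtlety worth pointing out is that full coverage $s^*=1$ is a boundary case in which the user becomes indifferent between insuring and not (both yield expected cost $R^*$), and in which the insurer's operating profit is identically zero; both facts are consistent with Corollary~\ref{rem:IRIICu} and do not affect feasibility.
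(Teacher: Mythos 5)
Your proposal is correct and follows essentially the same route the paper takes: use the combination of (IR-$i$) and (IC-$u$) to force $T = sR^*$ (Corollary~\ref{rem:IRIICu}), which annihilates the profit term and reduces $J_i$ to $\gamma(1-s)R^*$, a decreasing function of $s$, so the optimum sits at the boundary $s^*=1$, $T^*=R^*=T_{\max}$. Your explicit verification of (F-$i$) and (IR-$u$) at $s=1$ is a welcome detail the paper leaves implicit.
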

Proposition \ref{pro:OptimalInsurancePolicy} shows that a full coverage level and a maximum subscription fee are the optimal insurance policy of the insurer. Together with Proposition \ref{pro:SaddlePointSol}, we have the following proposition of the BGNE of the bi-level game in Case 1. 
\begin{proposition}
\label{pro:EquBi}
The bi-level game of Case 1 admits a unique BGNE solution $(p_u^*,p_a^*,\{s^*,T^*\}) = (0,0,\{1,\log\left(\frac{c_u}{c_a}+1\right)\})$. At the equilibrium, the insurer provides a full coverage for the user and charges a maximum subscription fee from the user. The user and attacker have no incentives to take actions at the equilibrium as the cost would be too high. The equilibrium also demonstrates that cyber insurance will effectively mitigate the loss.
\end{proposition}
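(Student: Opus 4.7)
The plan is to verify Definition \ref{def:EquStrategy} by chaining the two previously established equilibrium results. First, I would invoke Proposition \ref{pro:OptimalInsurancePolicy} to fix the insurer's layer: it already gives $\{s^*, T^*\} = \{1, \log(c_u/c_a + 1)\}$ as the unique solution of (\ref{eq:InsurerMinICIR}) under the (IR-$u$), (IC-$u$), (IR-$i$), and (F-$i$) constraints. Since the BGNE requires $(s^*, T^*)$ to solve the insurer's problem, this half is immediate.

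Next I would substitute $s = s^* = 1$ into the SPE expressions supplied by Proposition \ref{pro:SaddlePointSol}. Both formulas $p_u^*(s) = \gamma(1-s)/(c_u + c_a)$ and $p_a^*(s) = c_u\gamma(1-s)/(c_a(c_u+c_a))$ vanish at $s=1$, yielding $(p_u^*, p_a^*) = (0, 0)$. The feasibility inequality (\ref{eq:feasible}) collapses to $1 > 0$ and is therefore satisfied trivially, so the candidate triple matches the claimed BGNE.

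The main obstacle is that $s = 1$ sits on the boundary where $r(p_u,p_a) = \log(p_a/p_u + 1)$ is undefined at $(0,0)$, so the direct substitution into Proposition \ref{pro:SaddlePointSol} needs justification. I would handle this by noting that at $s=1$ the loss contribution $\gamma(1-s)R$ in $K$ vanishes identically, so the objective collapses to $K(p_u, p_a, 1) = c_u p_u - c_a p_a$ on the relevant domain. This separable linear objective has $(0,0)$ as its unique saddle point on $[0,1]^2$, directly verifying (\ref{eq:DefSaddlePoint}) without invoking $r$ at the degenerate point. Equivalently, one can argue by continuity, taking $s \to 1^-$ in the strictly interior SPE of Proposition \ref{pro:SaddlePointSol}.

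Uniqueness of the BGNE is inherited from the uniqueness of each layer: the insurer's program has a unique optimizer by Proposition \ref{pro:OptimalInsurancePolicy}, and the degenerate zero-sum game has the unique saddle point $(0,0)$ by the argument just given. The concluding remarks of the proposition — that neither the user nor the attacker has incentive to deviate, and that insurance effectively mitigates the loss — then follow immediately, since at full coverage any positive $p_u$ or $p_a$ would only incur cost without loss reduction or attack benefit.
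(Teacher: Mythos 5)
Your proof is correct and follows essentially the same route as the paper, which simply combines Proposition \ref{pro:OptimalInsurancePolicy} with Proposition \ref{pro:SaddlePointSol} by substituting $s^*=1$ into the SPE formulas. Your extra care at the degenerate point $(0,0)$ — where $r(p_u,p_a)$ is formally undefined — via the collapse of $K$ to the separable linear objective $c_u p_u - c_a p_a$ is a justification the paper glosses over (it silently relies on the convention $r(0,0)=0$), and it is a welcome strengthening rather than a deviation.
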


%
%
The analysis of the bi-level structure of the game informs the optimal insurance policies to transfer risks from one node to the insurer. The framework can also be extended to a scheme over interdependent infrastructures as illustrated in Fig. \ref{networkinsurance}. The cyber risks at one node can propagate to other nodes when there are cyber, physical, human, and social interdependencies. The insurance of one node can play a important role in well-being of the entire system. We can anticipate that the insurance problem should take into account network effects. This research can also be further extended to investigate the impact of dynamic evolutions of the risks on the mechanism of insurance and the protection behaviors of the agents.

\begin{figure}[]
\centering
\includegraphics[angle=270,width=1.0\textwidth]{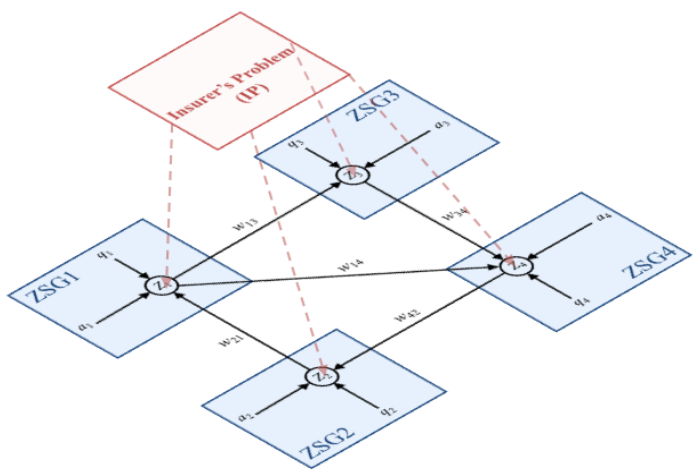}
\vspace{-10mm}
\caption{Cyber insurance over interdependent infrastructures}
\label{networkinsurance}
\end{figure}
\bibliographystyle{spmpsci}      

\bibliography{DraftRZ.bib}

\end{document}